\newcommand*\dashline{\rotatebox[origin=c]{90}{$\dabar@\dabar@\dabar@\dabar@\dabar@\dabar@\dabar@$}}
\newtcolorbox{mybox}{colback=red!5!white,colframe=red!75!black}
\newcommand*{\rom}[1]{\expandafter\@slowromancap\romannumeral #1@}
\newtheoremstyle{cited}%
  {3pt}
  {3pt}
  {\itshape}
  {}
  {\bfseries}
  {{\fontsize{15}{15}\selectfont .}}
  {.5em}
  {\thmname{#1} \thmnumber{#2} \thmnote{\normalfont#3}}
\newtheorem{lem}{Lemma}
\theoremstyle{remark}
\newtheorem{rem}{Remark}
\theoremstyle{definition}
\begin{document}
\title{Wireless Powered Cooperative Relaying using NOMA with Imperfect CSI}
\author{Dinh-Thuan Do\textsuperscript{*}, Mojtaba Vaezi\textsuperscript{\dag},   and Thanh-Luan Nguyen{\ddag}\\ \vspace{0mm}
   \IEEEauthorblockN{}\vspace{0.5mm}
    \IEEEauthorblockA{
         \textsuperscript{*}Wireless Communications Research Group, Faculty of Electrical and Electronics Engineering, Ton Duc \\ Thang University, Ho Chi Minh City, Vietnam\\
         \textsuperscript{\dag}Department of Electrical and Computer Engineering, Villanova University, PA 19085, USA\\
        \textsuperscript{\ddag}Department of Electrical Engineering, Bach Khoa University, Vietnam\\
    Email: dodinhthuan@tdt.edu.vn, mvaezi@villanova.edu, luannguyen.cce@gmail.com
   }
}

\maketitle

\begin{abstract}
The impact of imperfect channel state (CSI) information in an energy harvesting (EH)
 cooperative non-orthogonal multiple
access (NOMA) network, consisting of a source,
 two users, and an EH relay is investigated in this paper.
 The relay is not equipped with a fixed
power source and acts as a wireless powered node to help signal
transmission to the users.
Closed-form expressions for the outage probability of both users are derived  under
imperfect CSI
for two different power allocation strategies namely fixed and dynamic power allocation.
Monte Carlo simulations are used to numerically evaluate the effect of imperfect CSI.
These results  confirm the theoretical outage analysis  and
show  that NOMA can outperform orthogonal multiple access even with imperfect CSI.
\end{abstract}

\begin{IEEEkeywords}
NOMA, imperfect CSI, SWIPT, energy harvesting,  outage probability, Nakagami-$m$ fading.
\end{IEEEkeywords}

\section{Introduction}
\label{sec:intro}

Effective radio access technology development
plays a key role in achieving  high data rate and enabling
massive connectivity for  next-generation  wireless networks. Non-orthogonal multiple access (NOMA)
is one of the key enabling technologies toward this goal \cite{NOMAbook,saito2013non, ding2014performance,shin2017non, yeu2018exploiting}.
NOMA improves the spectral efficiency and increases the number of connections
and is seen as a promising candidate in 5G and beyond cellular networks.

Energy harvesting (EH),  on the other hand,
is an important process toward green communication systems
in which wasted or unimportant
energy, e.g., sound and  radio frequency signals, are captured and converted into electricity.
By allowing wireless nodes to recharge their batteries from the radio frequency
signals, rather than  the traditional energy sources,
EH can play  an important role in powering wireless devices and has the
ability to prolong the  lifetime of the  energy-constrained nodes.
Wireless  power transfer is one of the EH technologies in which
green energy can be harvested using either  the ambient signals or
 dedicated  power sources,
 and has recently emerged as a potential technology to assist
  reduction of power consumption \cite{pin2013ambient,val2014har,sbi2015wirel}.
Simultaneous wireless information and power transfer (SWIPT)
is a technology that enables transmission of
both power and information at the same time.
SWIPT can improve power
consumption and spectral efficiency and also can
be used for interference management.
Understanding  the theoretical aspects and potentials of SWIPT  has attracted much research interest  \cite{rzhang2013mimo,lliu2013wirel,jpark2013joint}.

Motivated by the diverse requirements  of 5G systems \cite{ITUSG05},
combining  NOMA and SWIPT has
recently been studied by different groups in various settings
\cite{liu2016cooperative,han2016performance, nguyen2017maximum, do2017bnbf,yang2017impact}.
Specifically, the application of SWIPT to NOMA  in a single-input-single-output system is
explored in  \cite{liu2016cooperative}, in which
three user-paring schemes are examined via their outage performance.
 The outage performance for EH-enabled NOMA relaying networks is  studied in
  \cite{han2016performance}. Several NOMA user selection models, including random  user
selection and  distance-based  user
selections  are  investigated in \cite{do2017bnbf}, where
the best-near best-far  user selection scheme is promoted to enhance the outage performance.
In the context of cooperative systems, SWIPT-NOMA  is introduced to lengthen the lifetime
 of the energy-constrained relay nodes.  Joint power allocation  and
 user selection is then considered to minimize the outage probability  of
all  users \cite{yang2017impact}.
Most of the prior works have, however, assumed perfect
channel state information (CSI) and  Rayleigh fading
channels. Due to  the existence of the channel estimation errors,
such an assumption is too idealistic and
does  not provide an insight on the  performance of practical networks \cite{vaezi2018myth}.
The analysis of NOMA networks with imperfect CSI is particularly important since the encoding and decoding
 of NOMA highly depends on the quality of CSI.

\begin{figure}
\begin{center}
\includegraphics[scale=0.75]{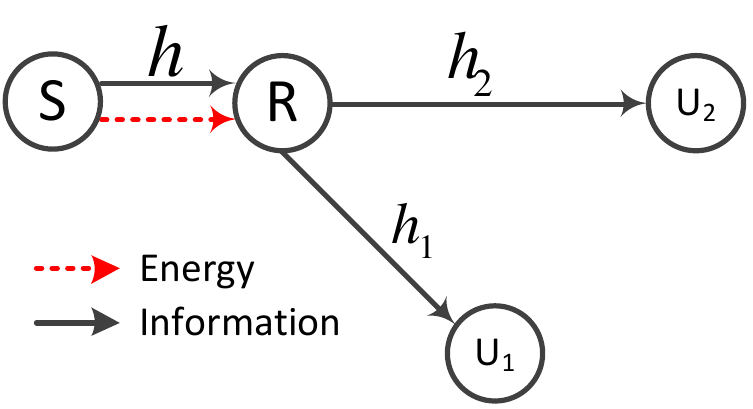}
\caption{A cooperative EH-NOMA consisting of a source (S)  with two users ($\rm{U}_1$ and  $\rm{U}_2$) simultaneously through an intermediate EH relay (R). Both S and R use NOMA for transmission.}
\label{fig:sys}
\end{center}
\end{figure}

In this paper, we investigate the effect of imperfect CSI in an EH-NOMA network. Specifically, we consider a cooperative NOMA network, consisting of a source,
two users, and an EH relay, as shown in Fig.~\ref{fig:sys},  in which the relay simultaneously receives information and harvests energy so that it can use the harvested energy to help the source to transmit information to the users.
We assume that the perfect CSI of source-relay link can be obtained whereas an imperfect CSI of the relay-user links can be obtained.
We drive  closed-form expressions for the outage probabilities of the NOMA users under the above assumptions and for two different power allocation schemes, namely fixed and dynamic cases.
The obtained results obviously include the Rayleigh fading case by setting  $m = 1$.
The outage probabilities of the imperfect CSI case are then compared with those of
the perfect CSI to evaluate and better understand the effect of imperfect CSI.

The paper is organized as follows.  We describe
the system model in Section~\ref{sec:model}, and  evaluate  outage probability of the users
under  fixed power allocation and  dynamic power allocation in  Section~\ref{sec:fpa} and
Section~\ref{sec:DPA}, respectively. The effect of imperfect successive interference cancellation
is studied in  Section~\ref{sec:imp}.
Simulation results are presented in Section~\ref{sec:sim} and the
paper is concluded in Section~\ref{sec:con}.

Throughout this paper, $\mathbb{P} \{\cdot\}$ is the probability measure, and
$f_{{Z}} \left( \cdot \right) $ and $F_{{Z}} \left( \cdot \right) $ denotes the probability density function (PDF) and the cumulative
distribution function (CDF) of random variable $Z, $ respectively.

\section{System Model}
\label{sec:model}

Consider the  cooperative NOMA network depicted in Fig.~\ref{fig:sys}.  consisting of a source  communicating with two users   through an intermediate EH relay. The relay uses decode-and-forward relaying to assist the transmission between the source and the two users. Both the source and the relay use NOMA to transmit the users signals. All nodes are equipped with a single antenna and operate in half-duplex mode \cite{do2018improving,yeu2018exploiting}.
We assume that perfect CSI of the source-relay link is available whereas the estimation of the relay-user links is imperfect.
This is a step towards a more practical setting in which CSI at the relay is obtained via channel estimation and thus is subject to errors, unlike most of the previous works in EH-NOMA.
All channels experience independent and identically distributed Rayleigh fading with zero mean and unit variance; hence, the channel coefficients $h$ and $h_n$, $ n \in\{ 1,2\}$, are modeled as complex Gaussian random variables ${\mathcal{CN}}(0,1)$.
\footnote{For the fixed power allocation in Section~\ref{sec:fpa},
the source-relay link is assumed to experience  Nakagami-$m$ fading which  is more general than  Rayleigh fading.}
We also define ${\tau_1} = {2^{2{R_1}}} - 1$, ${\tau_2} = {2^{2{R_2}}} - 1$ and ${\tau_0} = {2^{2{R_1} + 2{R_2}}} - 1$, where ${R_1}$  and ${R_2}$ (bits/s/Hz) are the target data rates for user~1 and user~2, respectively.

The transmission from the source to users consist of two phases. During the first phase, the source transmits a superimposed signal, $x_s \triangleq \sqrt{\xi}{s_2} + \sqrt{1 - \xi}{s_1}$, to both users through the relay, where $\xi \in \left[ {0,1} \right]$ denotes the power allocation at the source.
 	Let us define $P \triangleq \frac{P_s}{\sigma^2}$ and $P' \triangleq \frac{P_r}{\sigma^2}$
 in which  ${P_s}$ and ${P_r}$ is the transmit power of the source and the relay nodes, respectively, and  $\sigma^2$ is the variance of the additive white Gaussian noise  at the relay and  users. Similar to \cite{yang2017impact},  maximum transmit power to ensure signal decoding at the relay  is given as
\begin{equation}
\label{eq1}
P' = \frac{P_r}{\sigma^2} = {\eta}(Pg-\tau_0),\quad g > \frac{\tau_0}{P}
\end{equation}
 in which $\eta  \in \left[ {0,1} \right]$ denotes the energy harvesting efficiency,
$g \triangleq |h|^2d^{-\alpha }$ where $d$ is the distance between the source and the relay, and $\alpha$ is the path loss exponent. Similar to the source, the relay also adopts NOMA for transmission.

During the second phase, the relay transmits a superposition of the decoded signals, $x_r = \sqrt\delta{s_2}+\sqrt{1-\delta}{s_1}$, to both users, where $\delta  \in \left[ {0,1} \right]$ is the power allocation coefficient at the relay.

Due to imperfect CSI, the estimated channel gains of the relay-user links are given as $\hat{h}_n = {h_n}{d_n^{-\alpha/2}} + e_n$ ($n = 1, 2$), where $d_n$ is the distance between the relay and user $n$, $e_n \sim \mathcal{CN}(0,\sigma _e^2)$ is the channel estimation error and $\hat h_n$ is the estimated channel coefficients. We assume $e_n$ is independent of ${h_n}$, thus $\hat{h}_n$ can be modeled as a complex Gaussian random variable with zero mean and variance  $ d_n^{-\alpha } + \sigma_e^2$. Now, without loss of generality, let us order the estimated channel  as $|\hat{h}_1|^2 > |\hat{h}_2|^2 $. Thus, the instantaneous signal-to-interference-plus-noise-ratio (SINR) of  user~2 to detect its own signal is obtained as
\begin{equation}
\label{eq2}
{\gamma_2} = \frac{{\delta P'g_2}}{{\left( {1 - \delta } \right)P'g_2 + P'\sigma_e^2 + 1}},
\end{equation}
in which ${g_n} \triangleq {|{{\hat h_n}} |^2}$. In NOMA, optimal decoding order
depends on the channel gains \cite{vaezi2018myth}; the stronger user (user~1)  first decodes the weaker user's (user~2's) signal and next its own signal after successive interference cancellation (SIC) of user~2's signal. The instantaneous SINR  to detect $s_2$ at user~1 is given by
\begin{equation}
\label{eq3}
{\gamma_{{1 \mathord{\left/
 {\vphantom {1 2}} \right.
 \kern-\nulldelimiterspace} 2}}} = \frac{{\delta P'g_1}}{{\left( {1 - \delta } \right)P'g_1 + \sigma_e^2P' + 1}}.
\end{equation}

If ${\gamma_{{\rm{1/2}}}} \ge {\tau_2}$, user~1 can decode ${s_{\rm{2}}}$ and remove it from the received signal using SIC. Therefore, the instantaneous signal-to-noise ratio (SNR) at user~1 to decode $s_1$
after SIC\footnote{Due to imperfect CSI, there will be another error term
in the denominator of \eqref{eq3}, i.e.,  SIC cannot be perfect.
Equivalently, this can be  seen as a reduction in the target rate $\tau_1$.
This will be discussed in Section~\ref{sec:imp}.
} is given by
\begin{equation}
\label{eq4}
{\gamma_{\rm{1}}} = \frac{{\left( {1 - \delta } \right)P'g_1}}{{P'\sigma_e^2 + 1}}.
\end{equation}

\section{Fixed Power Allocation}
\label{sec:fpa}

In this section, we consider a fixed power allocation scenario for both users. Since the source node may have line-of-sight with the relay, the source-relay link is assumed to experience Nakagami-$m$ fading,  a more general case of Rayleigh fading in which the shape factor is $m = 1$. Consequently, the outage probability at user~1, $p_{\rm{F}}^{(1)}$ is expressed as
\begin{equation}
\label{eq5}
p_{\rm{F}}^{( 1 )} = 1 - \mathbb{P} \left\{ {g > \frac{\tau_0}{P},{\gamma_{1/2}} > {\tau_2},{{\gamma}_{\rm{1}}} > {\tau_1}} \right\}.
\end{equation}

\begin{lem} \label{lem1}
For the FPA case, the outage probability of user~1 when $\delta  >  \frac{{{\tau_2}}}{{{\tau_2} + 1}}$
can be expressed as
\begin{align}
\label{eq6}
p_{\rm{F}}^{( 1 )} =
1 - \sum\limits_{i = 0}^{m - 1} {\begin{gathered}{{m-1}\choose{i}}\end{gathered}e^{ - \frac{{{\tau_0}m}}{{d^{ - \alpha }P}}}{{\left( {\frac{{{\tau_0}m}}{{d^{ - \alpha }P}}} \right)}^{m - i - 1}}} \notag \\
 \times  \sum\limits_{k = 1}^3 {\frac{{{{\left( { - 1} \right)}^{k - 1}}}}{{{2^i}\Gamma \left( m \right)}}e^{ - \frac{{{{{\Phi }}_{{\rm{n1}}}}}}{{{\Omega_k}}}}\beta_{n,k}^{i + 1}{K_{i + 1}}\left( {{\beta_{n,k}}} \right)},
\end{align}
in which  ${K_v}\left( z \right)$ is the ${v^{{\rm{th}}}}$ order modified Bessel function of the second kind \cite{jeffrey2007table}, ${{\beta_{n,k}} = 2\sqrt {\frac{m}{{d^{ - \alpha }P}}\frac{{{{{\Phi }}_{{\rm{n2}}}}}}{{{\Omega_k}}}} }$ in which
\begin{align}\label{eq:Phi}
 \begin{bmatrix}
 	\Phi_{11}& \Phi_{12}\\
 	\Phi_{21}& \Phi_{22}
 \end{bmatrix}
 \triangleq  {\left[ {\frac{{{\tau_2}}}{{\delta \left( {{\tau_2} + 1} \right) - {\tau_2}}}\;\; \frac{{{\tau_1}}}{{1 - \delta }}} \right]^T}\left[ {\sigma_e^2 \;\; \frac{1}{\eta }} \right],
 \end{align}
${\Omega_1} = d_2^{ - \alpha } + \sigma_e^2$, ${\Omega_2} = \frac{{{\Omega_1}{\Omega_3}}}{{{\Omega_1} + {\Omega_3}}}$ and ${\Omega_3} = d_1^{ - \alpha } + \sigma_e^2$.
Also, note that $p_F^{\left( 1 \right)} = 1$  when $\delta  \le \frac{{{\tau_2}}}{{{\tau_2} + 1}}$.\end{lem}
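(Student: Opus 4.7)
The plan is to reduce \eqref{eq5} to a tractable double integral over $(g,g_1)$ and evaluate it in closed form using a standard modified-Bessel integral identity. As a first step, I would rearrange the two SINR inequalities. The constraint $\gamma_{1/2}>\tau_2$ becomes $g_1[\delta(\tau_2+1)-\tau_2]>\tau_2(\sigma_e^2+1/P')$, which is satisfiable only when $\delta>\tau_2/(\tau_2+1)$; in the complementary regime $\gamma_{1/2}\le\delta/(1-\delta)\le\tau_2$ uniformly, yielding $p_F^{(1)}=1$ as claimed. Similarly $\gamma_1>\tau_1$ becomes $g_1(1-\delta)>\tau_1(\sigma_e^2+1/P')$. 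After substituting $P'=\eta(Pg-\tau_0)$, the event in \eqref{eq5} reduces to $g>\tau_0/P$ together with $g_1>\Phi_{n1}+\Phi_{n2}/(Pg-\tau_0)$, where $n\in\{1,2\}$ is the dominant index; this choice is constant in $g$ because both candidate thresholds are proportional to $\sigma_e^2+1/[\eta(Pg-\tau_0)]$.

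Next I would condition on $g$ and compute $\mathbb{P}(g_1>t)$. Since $|\hat h_1|^2$ and $|\hat h_2|^2$ are independent exponentials with means $\Omega_3=d_1^{-\alpha}+\sigma_e^2$ and $\Omega_1=d_2^{-\alpha}+\sigma_e^2$, the ordered value $g_1=\max\{|\hat h_1|^2,|\hat h_2|^2\}$ has CCDF $\mathbb{P}(g_1>t)=e^{-t/\Omega_1}+e^{-t/\Omega_3}-e^{-t/\Omega_2}$ with $\Omega_2^{-1}=\Omega_1^{-1}+\Omega_3^{-1}$. This accounts for the three-term alternating sum $\sum_{k=1}^3(-1)^{k-1}$ appearing in \eqref{eq6}; inserting the threshold separates the exponent into the $g$-free factor $e^{-\Phi_{n1}/\Omega_k}$ and the $g$-dependent factor $\exp(-\Phi_{n2}/[(Pg-\tau_0)\Omega_k])$.

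The final step is to integrate against the Nakagami-$m$ density of $g$, which is Gamma with shape $m$ and mean $d^{-\alpha}$. After the substitution $u=Pg-\tau_0$ and a binomial expansion of $(u+\tau_0)^{m-1}$ (here $m$ is an integer), each summand reduces to $\int_0^\infty u^i e^{-au-b/u}\,du=2(b/a)^{(i+1)/2}K_{i+1}(2\sqrt{ab})$ with $a=m/(Pd^{-\alpha})$ and $b=\Phi_{n2}/\Omega_k$, so the Bessel argument is exactly $\beta_{n,k}$. Pulling the common Nakagami factor $e^{-m\tau_0/(Pd^{-\alpha})}$ out front and collecting the remaining powers of $m$, $d^{\alpha}$, $P$, $\tau_0$ and $2$ should then match the prefactor $\binom{m-1}{i}(\tau_0 m/(d^{-\alpha}P))^{m-i-1}/(2^i\Gamma(m))$. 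I expect the main obstacle to be algebraic bookkeeping---keeping straight the $d^{\alpha}$ versus $d^{-\alpha}$ powers and reconciling the $(i+1)$-dependent factors that emerge from the Bessel identity---rather than the probabilistic content, which is entirely standard.
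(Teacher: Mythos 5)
Your proposal is correct and follows essentially the same route as the paper's Appendix A: rewriting the event as $\{g>\tau_0/P,\ g_1>\Lambda_{\max}\}$ with $\Lambda_{\max}=\Phi_{n1}+\Phi_{n2}/(Pg-\tau_0)$, using the three-term ordered (max-of-two-exponentials) CCDF of $g_1$ to get the alternating sum over $\Omega_k$, and then integrating against the Gamma density of $g$ via the shift $u=Pg-\tau_0$, a binomial expansion, and the identity \cite[Eq.~3.471.9]{jeffrey2007table}, which reproduces $\beta_{n,k}^{i+1}K_{i+1}(\beta_{n,k})$ and the stated prefactors. The only cosmetic difference is that the paper states the dominant-index condition explicitly as $\delta\lessgtr(\tau_0-\tau_1)/\tau_0$, whereas you argue (equivalently) that the maximizing index is independent of $g$.
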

\begin{proof} See Appendix A.
\end{proof}

Similar to (\ref{eq5}), we have $p_\text{F}^{(2)}=1-\mathbb{P} \left \{P' >0,\gamma_2 > \tau_2 \right\}$ as the outage probability for user 2. For $\delta  >  \frac{{{\tau_2}}}{{{\tau_2} + 1}}$, $p_\text{F}^{(2)}$ can be expressed as
\begin{align}
\label{eq8}
p_{\rm{F}}^{\left( 2 \right)} =
1 - \sum\limits_{i = 0}^{m - 1} {\begin{gathered}{{m-1}\choose{i}}\end{gathered}e^{ - \frac{{{\tau_0}m}}{{d^{ - \alpha }P}}}{{\left( {\frac{{{\tau_0}m}}{{d^{ - \alpha }P}}} \right)}^{m - i - 1}}} \notag \\
 \times  \frac{1}{{{2^i}\Gamma \left( m \right)}}e^{ { - \frac{{{{{\Phi }}_{11}}}}{{{\Omega_2}}}}}\beta_{1,2}^{i + 1}{K_{i + 1}}\left( {{\beta_{1,2}}} \right).
\end{align}
Note that $p_\text{F}^{\left(1 \right)} = p_\text{F}^{\left(2 \right)} = 1$ when $\delta  \le \frac{{{\tau_2}}}{{{\tau_2} + 1}}$.

\begin{rem}
When the source-relay link experiences a Rayleigh fading, we can substitute $m = 1$ into (\ref{eq6}) and (\ref{eq8}) to achieve the outage probability for both users.
\end{rem}

\begin{rem}
Diversity order is zero at both users; i.e.,
\begin{align}\label{eq:dof}
D_\text{F}^{(n)} =  - \lim _{P \to \infty } \frac{\log p_{\text{F} }^{(n)}}{\log P} =0, \quad n =1, 2.
\end{align}
This is because in the high SNR regime, i.e., when  $P \to  \infty$, we have $\beta _{n,k}^{i + 1}{K_{i + 1}}({\beta _{n,k}}) \to {2^i}\Gamma (i + 1)$ \cite{jeffrey2007table} and ${e^{ - \frac{{{\tau _0}m}}{{{d^{ - \alpha }}P}}}} \to 1$. Then,
$p_{\text{F}}^{(1)} = 1 - \sum\limits_{k = 1}^3 {{{( - 1)}^{k - 1}}{e^{ - \frac{{{\Phi _{{\text{n1}}}}}}{{{\Omega _k}}}}}}$,  ${\text{ }}p_{\text{F}}^{(2)} = 1 - e^{-\frac{\Phi_{11}}{\Omega_2}}$, and  \eqref{eq:dof} holds.
\end{rem}

\section{Dynamic Power Allocation}
\label{sec:DPA}
In this section, the power allocation is adjusted at the relay to ensure signal decoding at user 2. Since it is not tractable to adopt Nakagami-$m$ fading model on the source-relay link due to its mathematical complexity, we use the Rayleigh model in this section. By solving ${\gamma_2} = {\tau_2}$ to find $\delta$ and,  noting that $1- \delta$ must be non-negative, we  obtain
\begin{equation}
\label{eq10}
\delta  = 1 - \max \left\{ {0,\;\frac{{P'g_2 - {\tau_2}\left( {\sigma_e^2P' + 1} \right)}}{{\left( {1 + {\tau_2}} \right)P'g_2}}} \right\}.
\end{equation}

It should be highlighted that the performance of user 2, and consequently user fairness can be further enhanced by letting $\delta$ larger than what is given in (\ref{eq10}). This, however, significantly reduces the performance of user~1 and also the sum rate \cite{vaezi2018myth}.

Based on (\ref{eq5}) and (\ref{eq10}), the outage probability at user~1 for the DPA case is defined as
\begin{equation}
\label{eq11}
p_{\rm{D}}^{\left( 1 \right)} = 1 - \mathbb{P} \left\{ { g > \frac{\tau_0}{P} ,g_2 > {\tau_2}\Big( {\sigma_e^2 + \frac{1}{{P'}}} \Big),{{\gamma}_{\rm{1}}} > {\tau_1}} \right\}.
\end{equation}

\begin{lem}
 For the DPA case, the outage probability of user~1 is given by
\begin{equation}
\begin{array}{l}
p_{\rm{D}}^{\left( 1 \right)}
= 1 -  e^{ - \sigma_e^2\frac{{{\tau_0}}}{{{\Omega_2}}} - \frac{{{\tau_0}d^\alpha }}{{P}}}{\omega_0}{K_1}\left( {{\omega_0}} \right) - \Upsilon_{\rm{II}},
\end{array}
\end{equation}
in which $\Upsilon_{\rm{II}}$ is defined in (\ref{eq23}) and ${\omega_\ell} = 2\sqrt {\frac{1}{{\eta Pd^{ - \alpha }}}\frac{{{\tau_\ell}}}{{{\Omega_2}}}}$, $\ell \in \left\{ 0,2 \right\}$.
\end{lem}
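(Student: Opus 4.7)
The plan is to evaluate the complementary probability $1-p_{\rm{D}}^{(1)} = \mathbb{P}(\mathcal{A})$, where $\mathcal{A}$ is the intersection of the three events in \eqref{eq11}. I condition on $g$, compute the conditional probability in terms of $(g_1,g_2)$, and then integrate against the exponential density $f_g(x)=d^\alpha e^{-d^\alpha x}$ that follows from $h\sim\mathcal{CN}(0,1)$ and $g=|h|^2 d^{-\alpha}$. The first task is to rewrite $\gamma_1>\tau_1$. Substituting the DPA expression \eqref{eq10} for $\delta$ into \eqref{eq4} and setting $B \triangleq P'\sigma_e^2+1$, one obtains $1-\delta=(P'g_2-\tau_2 B)/[(1+\tau_2)P'g_2]$, so that $\gamma_1>\tau_1$ is equivalent to $g_1>C(g_2)$ with $C(y) \triangleq \tau_1(1+\tau_2)B\,y/[P'y-\tau_2 B]$. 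Combined with the ordering $g_1>g_2$ built into the model, the inner event is $\{g_1>\max\{g_2,C(g_2)\}\}$.

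I then observe that $C$ is strictly decreasing on its domain $y>\tau_2 B/P' = \tau_2(\sigma_e^2+1/P')$ while $y\mapsto y$ is increasing, so there is a unique crossover $g_2^\ast$ solving $C(g_2^\ast)=g_2^\ast$. A direct calculation using $\tau_0=\tau_1+\tau_2+\tau_1\tau_2$ yields $g_2^\ast = \tau_0 B/P' = \sigma_e^2\tau_0 + \tau_0/P'$. This motivates splitting $\mathcal{A}$ (given $g$) into Region I, where $g_2>g_2^\ast$ and $g_1>g_2$ is automatic, and Region II, where $\tau_2(\sigma_e^2+1/P')<g_2\le g_2^\ast$ and $g_1>C(g_2)$ is the binding constraint. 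Since $(g_1,g_2)$ is the ordered pair of two independent exponentials with means $\Omega_3$ and $\Omega_1$, the smaller one $g_2$ is itself exponential with rate $1/\Omega_1+1/\Omega_3=1/\Omega_2$; hence the conditional probability of Region I is $e^{-g_2^\ast/\Omega_2}$. Integrating against $f_g$ on $(\tau_0/P,\infty)$, substituting $u=Pg-\tau_0$, and applying the identity $\int_0^\infty e^{-au-b/u}\,du = 2\sqrt{b/a}\,K_1(2\sqrt{ab})$ produces precisely the $e^{-\sigma_e^2\tau_0/\Omega_2-\tau_0 d^\alpha/P}\,\omega_0 K_1(\omega_0)$ term in the claim.

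For Region II, I would use the joint density of the ordered pair, $f(y,x) = \Omega_3^{-1}e^{-y/\Omega_3-x/\Omega_1} + \Omega_1^{-1}e^{-y/\Omega_1-x/\Omega_3}$ valid for $0<y<x$. Integrating over $x>C(y)$ reduces the conditional probability of Region II to two symmetric $y$-integrals of the form $\int e^{-y/\Omega_j - C(y)/\Omega_k}\,dy$, and averaging the result over $g$ produces what \eqref{eq23} denotes by $\Upsilon_{\rm{II}}$. The principal obstacle is exactly this Region II computation: because $C(y)$ is a rational function of $y$, the exponent is not linear in $y$ and one cannot invoke the Bessel identity directly. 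I expect to use a change of variable such as $v=(P'y-\tau_2 B)^{-1}$ (equivalently a partial-fraction rearrangement) to recast $-C(y)/\Omega_k$ as a linear-plus-reciprocal expression in the new variable; this restores the standard form $e^{-av-b/v}$ and lets the modified-Bessel identity be applied term by term, first in $v$ (or $y$) and then in $g$. Careful bookkeeping of the two symmetric contributions from $f(y,x)$, with $\Omega_1$ and $\Omega_3$ interchanged, then gives the explicit form of $\Upsilon_{\rm{II}}$ recorded in \eqref{eq23}.
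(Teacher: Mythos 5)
Your decomposition is exactly the paper's: substituting \eqref{eq10} into \eqref{eq4} to get the threshold $C(g_2)$ (the paper's $\nu(x,z)$), locating the crossover at $\tau_0\big(\sigma_e^2+\tfrac{1}{P'}\big)$ via $\tau_0=\tau_1+\tau_2+\tau_1\tau_2$, and splitting the success event into the two probabilities of \eqref{eq21}. Your Region~I computation is also correct and matches the paper's first term: the min of the two exponentials has mean $\Omega_2$, and the substitution $u=Pg-\tau_0$ together with $\int_0^\infty e^{-au-b/u}\,du=2\sqrt{b/a}\,K_1(2\sqrt{ab})$ gives $e^{-\sigma_e^2\tau_0/\Omega_2-\tau_0 d^\alpha/P}\,\omega_0 K_1(\omega_0)$.

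The gap is in Region~II, which is the substance of the lemma, since the claim is precisely that the remaining probability equals $\Upsilon_{\rm II}$ as given in \eqref{eq23}. Your plan --- integrate out $g_1$, then change variables (e.g.\ $v=(P'y-\tau_2 B)^{-1}$ with your $B=P'\sigma_e^2+1$, or a partial-fraction split of $C(y)$) to bring the $g_2$-integral into the form $e^{-av-b/v}$ and apply the Bessel identity ``first in $v$, then in $g$'' --- does not go through. The identity \cite[Eq.~3.471.9]{jeffrey2007table} requires integration over the full half-line $(0,\infty)$, but in Region~II $g_2$ ranges over the \emph{finite} interval $\big(\tau_2(\sigma_e^2+\tfrac{1}{P'}),\,\tau_0(\sigma_e^2+\tfrac{1}{P'})\big)$; your substitution maps it to $\big(\tfrac{1}{\tau_1(1+\tau_2)B},\infty\big)$, still an incomplete range, and incomplete integrals of $e^{-av-b/v}$ admit no $K_\nu$ closed form. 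Even granting such a step, the resulting Bessel functions would depend on $g$ through $P'$ and $B$ in a way that does not reduce the subsequent $g$-average to another Bessel-type integral. The paper's Appendix~B sidesteps this by transforming the \emph{other} two coordinates, $\upsilon=\eta(Pz-\tau_0)$ and $\rho=\frac{\upsilon}{\upsilon\sigma_e^2+1}x-\tau_2$, so that $\rho$ ranges over $(0,\tau_1+\tau_1\tau_2)$ independently of $\upsilon$; the $\upsilon$-integral (over $(0,\infty)$) is the one evaluated by 3.471.9, and the leftover finite-range $\rho$-integral has no closed form and is approximated by Gaussian--Chebyshev quadrature --- that approximation \emph{is} \eqref{eq23}. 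In short, $\Upsilon_{\rm II}$ is only semi-analytical, and your route would neither succeed as stated nor arrive at the expression the lemma refers to. (Minor slip: your ordered-pair density should be $\tfrac{1}{\Omega_1\Omega_3}\big(e^{-y/\Omega_3-x/\Omega_1}+e^{-y/\Omega_1-x/\Omega_3}\big)$; your prefactors $\Omega_3^{-1}$, $\Omega_1^{-1}$ only appear after integrating out $x$.)
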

\begin{proof} See Appendix B.
\end{proof}

Further, the outage probability at user~2 for the DPA case is given as
\begin{equation}
\label{eq16}
p_{\rm{D}}^{\left( 2 \right)} = 1 - \mathbb{P} \left\{ {g > \frac{\tau_0}{P},{g_2} > {\tau_2}\Big( {\sigma_e^2 + \frac{1}{{P'}}} \Big)} \right\}.
\end{equation}
Then it is straightforward to show that the outage probability of user 2 is obtained by
\begin{equation}
\label{eq17}
p_{\rm{D}}^{\left( 2 \right)} = 1 - e^{ - \sigma_e^2\frac{{{\tau_2}}}{{{\Omega_2}}} - \frac{{d^\alpha {\tau_{\rm{0}}}}}{{P}}} {\omega_2}{K_1}\left( {{\omega_2}} \right).
\end{equation}
\begin{rem}
Similar to the FPA case in Remark~1,  the diversity order is  zero in the DPA case too.
\end{rem}

\section{The Effect of Imperfect SIC}
\label{sec:imp}

 Obtaining  \eqref{eq4} in Section~\ref{sec:model}, we have assumed that SIC
  is perfect. However, since the CSI of the relay-to-user channels is imperfect,
 SIC cannot be error-free in practice.
  In this section,  we investigate  the  detrimental effect of imperfect CSI
 in  SIC and outage probabilities. It should be highlighted that imperfect SIC can significantly
 affect the performance of the system even if the channel estimation is accurate.

To study the  joint impact of imperfect SIC and imperfect CSI,
 an analysis similar to that of \cite{chen2018fully} can be carried out here. However,
 applying this analysis  for the DPA-NOMA scheme is not straightforward and
 requires large modifications which cannot be included due to the space limit. Recall that in the DPA-NOMA case,  the relay does not have the knowledge of the SIC process at user~1 and thus can only adjust the power allocation in a perfect SIC manner. Thus, we only discuss the impact of imperfect SIC to the FPA-NOMA scheme.

We know that the received signal at user 1 (i.e. the stronger user) is given by
\begin{align}
y_1 =& \sqrt{P_r}\left(\sqrt{\delta}{s_2} + \sqrt{1 - \delta} {s_1} \right)(\hat{h}_1 + {e_1}) + {n_1}.
\end{align}
Then, the received signal after the SIC process is given by
\begin{align}
\hat{y}_1 =& \sqrt{P_r(1 - \delta)}{s_1} \hat{h}_1  + \sqrt{P_r \delta} \hat{h}_1 (s_2 - \hat{s}_2) \nonumber \\
	&+ (\sqrt{P_r(1 - \delta)}{s_1} + \sqrt{P_r \delta}  {s_2}){e_1} + {n_1},
\end{align}
where $\hat{s}_2$ is the estimated signal of the user~2 at  user~1. Let $\sigma^2_{ic} \triangleq \mathbb{E}[|s_2 - \hat{s}_2|^2]$ denote the expected residual power level after SIC, $\sigma^2_{ic} = 0$ would refer to perfect SIC case in which $s_2 = \hat{s}_2$.

The SINR at user~1 after imperfect SIC, i.e., the modified version of \eqref{eq3},  becomes
\begin{align}
\label{eqQ5.1}
\gamma_1 = \frac{(1-\delta){P'}{g_1}}{{P'}{\sigma_e^2} + {P'}{\delta} {g_1} {\sigma^2_{ic}} + 1}.
\end{align}
Then, for the FPA-NOMA with imperfect SIC, Lemma~\ref{lem1} requires a slight modification.
Specifically, \eqref{eq:Phi} should be replace by
\begin{align}
\left[
\begin{matrix}
	\Phi_{11} & \Phi_{12} \\
	\Phi_{21} & \Phi_{22} \\
\end{matrix}
\right] \triangleq
\left[
	\frac{\tau_2}{\delta(\tau_2 + 1)-\tau_2} ,
	\frac{\tau_1}{1 - \delta(\tau_1 \sigma^2_{ic} + 1 )}
\right]^T
\left[ \sigma^2_e, \frac{1}{\eta} \right].
\end{align}
Note that $p_{\rm{F}}^{(1)}$ = $p_{\rm{F}}^{(2)}$ = 0 when either $\delta \le \frac{\tau_2}{\tau_2 + 1}$ or $\delta \ge \frac{1}{\tau_1 \sigma^2_{ic} + 1}$.
\begin{proof}
Taking similar steps as in Appendix A, we can get the exact outage probability for this case. Specifically, $p_{\rm{F}}^{(1)}$ and $p_{\rm{F}}^{(2)}$ are simply derived by  substitution into (A-2) which gives
\begin{align}
\Lambda_{\text{max}} &=  \nonumber \\
&\left\{
	\begin{matrix}
		\Phi_{11} + \frac{\Phi_{12}}{Pg-\tau_0}, & \frac{\tau_2}{\tau_2 + 1} < \delta < \min\Big[ \frac{\tau_0 - \tau_1}{\tau_0 + \tau_1 \tau_2 \sigma^2_{ic}},\frac{1}{\tau_1 \sigma^2_{ic} +1} \Big] \\
		\Phi_{21} + \frac{\Phi_{22}}{Pg-\tau_0}, &  \frac{1}{\tau_1 \sigma^2_{ic} +1} > \delta \ge \frac{\tau_0 - \tau_1}{\tau_0 + \tau_1 \tau_2 \sigma^2_{ic}}
	\end{matrix}
\right.
\end{align}
\end{proof}

\begin{figure}
\begin{center}
    \includegraphics[scale=0.55]{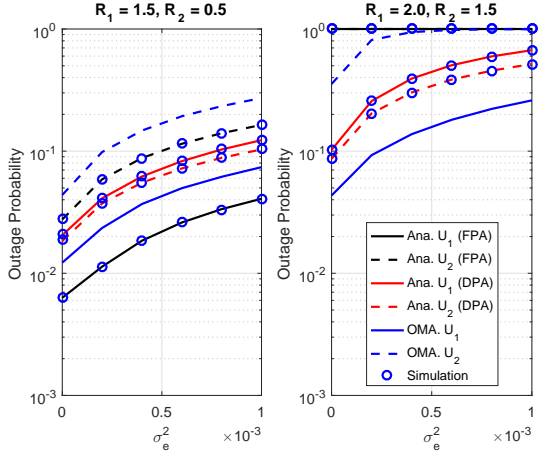}
    \caption{Outage probability  of FPA-NOMA/DPA-NOMA cases versus $\sigma_e^2$, where $ P_s = 15$ dB.}
    \label{fig2}
\end{center}
\end{figure}
\begin{figure}
\begin{center}
    \includegraphics[scale=0.55]{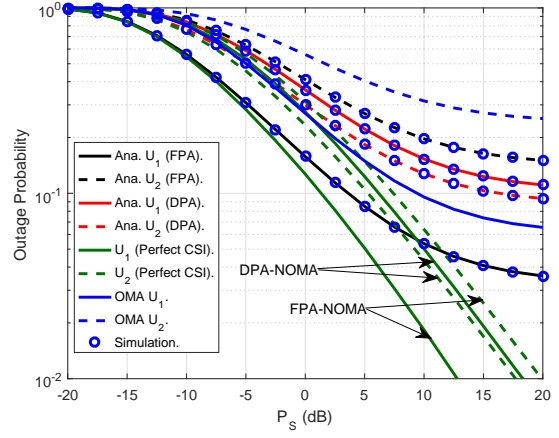}
    \caption{Outage probability of FPA-NOMA/DPA-NOMA case versus $P_s$, where  $ {R_{\rm{2}}} = 0.5, {R_{\rm{1}}} = 1.5$ and $\sigma_e^2 = 0.001.$}
    \label{fig3}
    \end{center}
\end{figure}

\begin{figure}
\begin{center}
    \includegraphics[scale=0.55]{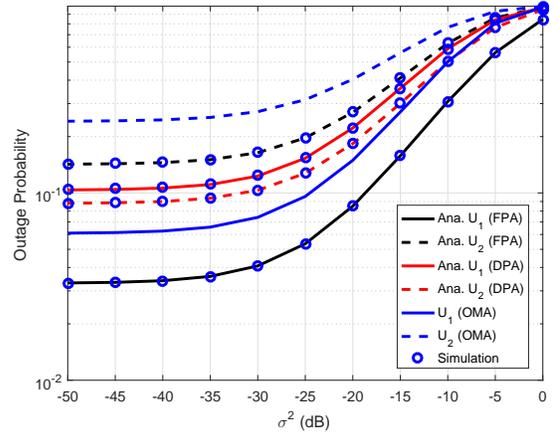}
    \caption{Outage probability of FPA-NOMA/DPA-NOMA for imperfect and
perfect CSI cases versus noise power, where  $ {R_{\rm{2}}} = 0.5, {R_{\rm{1}}} = 1.5$, $\sigma_e^2 =  0.001$ and $ P_s = 15$ dB.}
    \label{fig4}
    \end{center}
\end{figure}

\begin{figure}
\begin{center}
    \includegraphics[scale=0.6]{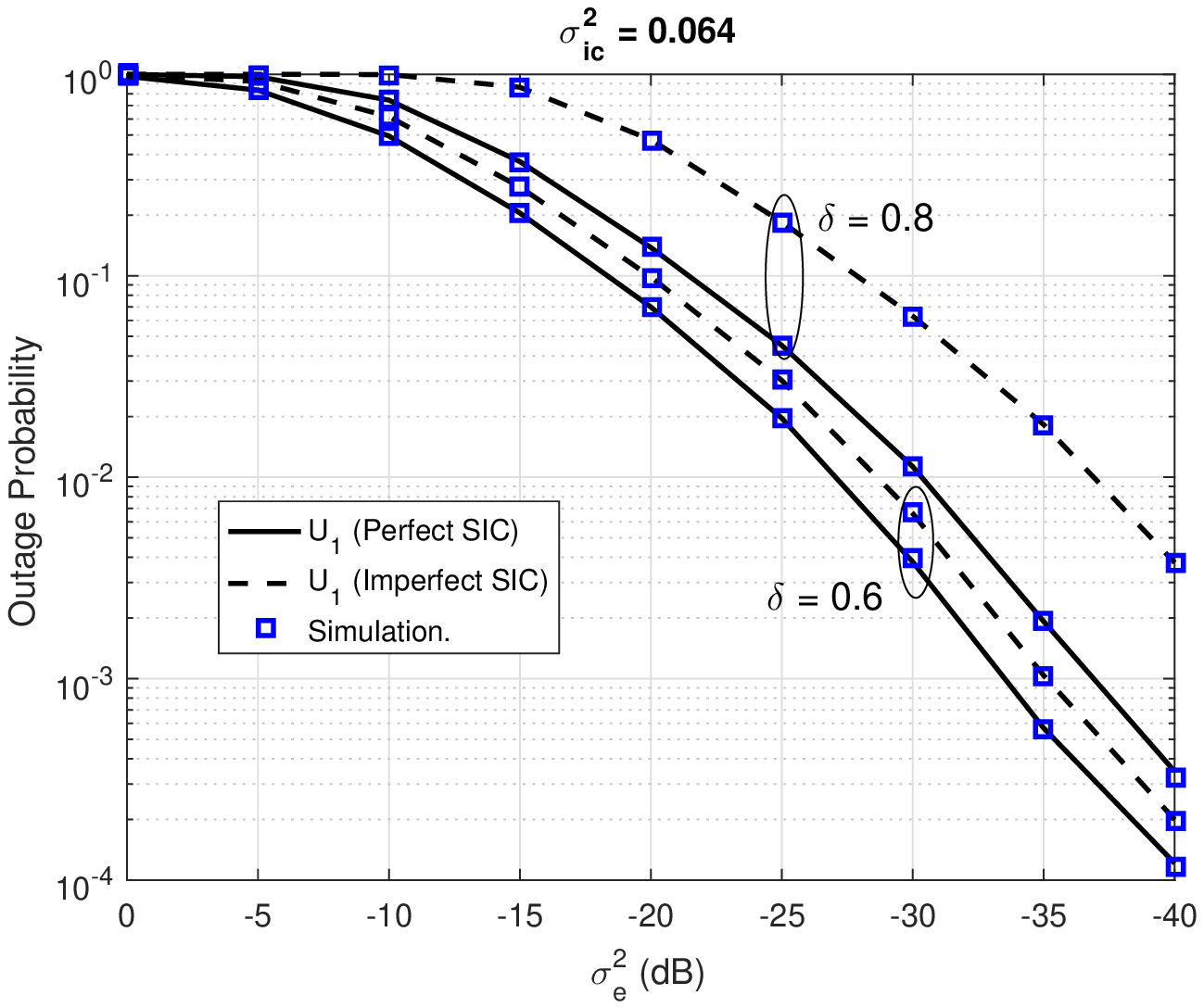}
    \caption{Outage probability of FPA-NOMA case versus $\sigma_e^2$, where  $ {R_{\rm{2}}} = 0.5, {R_{\rm{1}}} = 1.5$, $d = d_1 =1$, $d_2 = 10$ and $P_s = 30$ dB}.
    \label{fig5}
    \end{center}
\end{figure}

\begin{figure}
\begin{center}
    \includegraphics[scale=0.6]{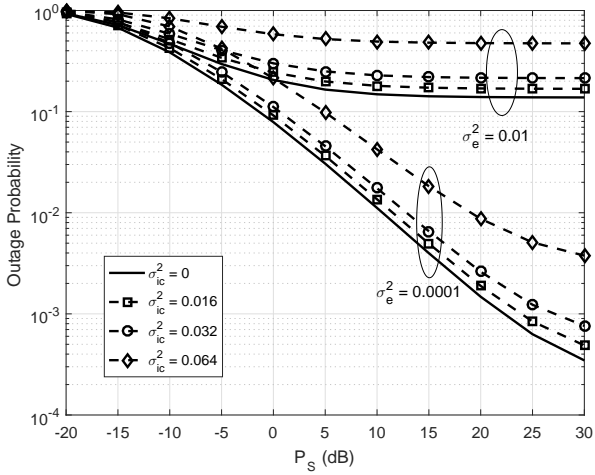}
    \caption{Outage probability of FPA-NOMA case versus $P_s$, where  $ {R_{\rm{2}}} = 0.5, {R_{\rm{1}}} = 1.5$, $d = d_1 =1$ and $d_2 = 10$.}
    \label{fig6}
    \end{center}
\end{figure}

\section{Numerical Results}
\label{sec:sim}

In this section, numerical results are presented to validate the analytic results and to evaluate the outage performance of the downlink EH-NOMA network. The source-relay and relay-users links are considered under Rayleigh fading environment, ${\sigma ^2}$ = -30 dB and $\delta  = 0.8$. The distances between the Relay to each node are ${d} = 1$, ${d_1} = 1$, ${d_2} = 10$ and the path loss exponent is $\alpha  = 2$. Note that the target data rates from Fig.~\ref{fig2} to Fig.~\ref{fig4}, i.e. ${R_1}$ and ${R_2}$, are given in terms of bits/s/Hz.

As can be seen from Fig.~\ref{fig2} to Fig.~\ref{fig4}, the analytical curves match  the simulation curves. In Fig.~\ref{fig2}, the outage probability at user~1 and user~2 increases as $R_1$ and $R_2$ increase. Choosing large values for $R_1$ and $R_2$ could make the outage event always occur, i.e. the outage probabilities become one. Figures~\ref{fig3} and \ref{fig4} confirm that as $P_s$ increases the outage probability decreases. Specifically, in Fig.~\ref{fig3}, FPA-NOMA results in a higher outage at the weak user but provides much higher success probability for the strong user than DPA-NOMA. Unlike that in Fig.~\ref{fig4}, it is seen that FPA-NOMA can provide better performance for both users than OMA. Further, it is observed that an outage floor appears in the outage probability due to the channel estimation errors.

Further, the impact of channel estimation error  $\sigma^2_{e}$ and SIC error $\sigma^2_{ic}$ is investigated in Fig.~\ref{fig5} and Fig.~\ref{fig6}. It is observed that a higher channel estimation accuracy leads to a higher degradation in the outage performance with the same value of $\sigma^2_{ic}$. For example,  for $\sigma^2_e = -40$ dB the outage probability with imperfect SIC is reduced more than 10 times from that with perfect SIC,  but for $\sigma^2_e = -20$ dB this reduction is about 4 times. However, one can improve the system performance as well as the gap between the perfect and imperfect SIC  by increasing the power allocated for user~1 (decreasing $\delta$). In fact, as $\delta$ decreases (but  $\delta > \frac{\tau_2}{\tau_2 + 1}$ holds), the effect of imperfect SIC become less significant, e.g., resulting only 1.7 times reduction in the performance at $\sigma^2_e = -40$ for $\delta = 0.6$. Further, by increasing the transmit power, the outage performance can be improved but reaches different outage floor depending on $\sigma^2_{ic}$ as depicted in Fig.~\ref{fig6}.
\section{Conclusion}
\label{sec:con}
We have examined the outage performance of EH-NOMA networks with imperfect CSI
and imperfect SIC. We have derived closed-form expressions for the outage probability at each user. Simulation results confirm the analytic results and show that NOMA can improve the outage probability of both users. In addition, it is shown that FPA-NOMA achieves better performance for the stronger user than DPA-NOMA but poor fairness. Further, DPA-NOMA provides a higher success probability for the weaker than FPA-NOMA and a better fairness. It is also confirmed that FPA-NOMA can outperform orthogonal multiple access schemes.
\\
\renewcommand{\theequation}{B-\arabic{equation}}
\setcounter{equation}{2}  
\begin{figure*}
\begin{align}
\label{eq23}
\Upsilon_{\text{II}}
	&= \sum_{i \in \{1,3\}} \frac{d^\alpha}{\Omega_{\bar{i}} {\eta}{P}}
		e^{ -\frac{\tau_0}{P{d^{-\alpha}}} }
		\int\limits_0^{\tau_1+{\tau_1}{\tau_2}} e^{ -\text{H}_{\bar{i}}(\rho)\sigma_e^2 }
		\int\limits_0^{\infty}\Big(\sigma_e^2 + \frac{1}{\upsilon}\Big) \exp\Big(-\text{H}_{\bar{i}}(\rho)-\frac{\upsilon}{{\eta}{P}{d^{-\alpha}}}\Big) d{\upsilon}d{\rho} \nonumber \\
	&\mathop{=}\limits^{(a)}
		\sum_{i \in \{1,3\}} \frac{1}{\Omega_{\bar{i}}}
		e^{ -\frac{\tau_0}{P{d^{-\alpha}}} }
		\int\limits_0^{\tau_1+{\tau_1}{\tau_2}}
		\mathop{
		\underbrace{		
		e^{ -\text{H}_{\bar{i}}(\rho)\sigma_e^2 }
		\left[ {\sigma_e^2}{2\sqrt{\frac{\text{H}_{\bar{i}}(\rho)}{{\eta}{P}{d^{-\alpha}}}}}
				K_1\left(2\sqrt{\frac{\text{H}_{\bar{i}}(\rho)}{{\eta}{P}{d^{-\alpha}}}}\right)
			  +\frac{2}{{\eta}{P}{d^{-\alpha}}}
			  	K_0\left(2\sqrt{\frac{\text{H}_{\bar{i}}(\rho)}{{\eta}{P}{d^{-\alpha}}}}\right)
		\right]
		}}\limits_{\triangleq h_i(\rho)}d{\rho} \nonumber \\
	&\mathop{\approx}\limits^{(b)}
		\sum_{i \in \{1,3\}} \frac{1}{\Omega_{\bar{i}}}
		e^{ -\frac{\tau_0}{P{d^{-\alpha}}} }
		\frac{\tau_1+{\tau_1}{\tau_2}}{2}
		\sum_{j=1}^J
		\frac{\pi}{j} \Big|\sin\Big(\frac{2j-1}{2J}{\pi}\Big)\Big|{h_i(\rho_n)},
\end{align}
where $\rho_n = \frac{\tau_1+{\tau_1}{\tau_2}}{2}\left[1 + \cos(\frac{2j-1}{2J}{\pi})\right]$ and $J$ is the trade-off coefficient reflecting the approximation accuracy.
\hrule
\end{figure*}

\renewcommand{\theequation}{A-\arabic{equation}}
\setcounter{equation}{0}  
\section*{Appendix A}
From (\ref{eq5}) the outage probability of user~1 can be written as
\begin{equation}
\label{eq18}
\begin{array}{l}
p_{\rm{F}}^{( 1 )} = 1 - \mathbb{P} \big \{ g > \frac{\tau_0}P, \; g_1 > \Lambda_{\max }\big\},
\end{array}
\end{equation}
where
\begin{align} \label{eq:Gamma}
\Lambda_{\max }&={\max \left( {{{{\Phi }}_{11}} + \frac{{{{{\Phi }}_{12}}}}{{Pg - {\tau_0}}}, {{{\Phi }}_{21}} + \frac{{{{{\Phi }}_{22}}}}{{Pg - {\tau_0}}}} \right)} \nonumber \\ &= \left\{
                                                            \begin{array}{ll}
                                                              {\Phi}_{11} + \frac{{\Phi}_{12}}{P{g - \tau_0}}, & \frac{\tau_2}{\tau_2 +1} < \delta < \frac{\tau_0 -\tau_1}{\tau_0}\\
                                                               {\Phi}_{21} + \frac{{\Phi}_{22}}{P{g - \tau_0}}, & \qquad \quad \delta \ge  \frac{\tau_0 -\tau_1}{\tau_0}
                                                            \end{array}
                                                          \right.
\end{align}
where the last step can be verified in view of \eqref{eq:Phi}.
Since ${F_{g_1}}\left( \gamma  \right) = 1 - \sum\nolimits_{k = 1}^3 {{{\left( { - 1} \right)}^{k - 1}}\exp ( { - \frac{\gamma }{{{\Omega_k}}}} )}$  is the CDF of $g_1$ and ${f_{g}}\left( \gamma  \right) = \frac{{{m^m}d^{\alpha m}}}{{\Gamma \left( m \right)}}{\gamma ^{m - 1}}\exp ( { - md^\alpha \gamma })$ is the PDF of ${g}$ \cite{yang2017impact},  $p_{\rm{F}}^{\left( 1 \right)}$ can be expressed as
\begin{align}
\label{eq19}
p_{\rm{F}}^{\left( 1 \right)} = 1 &- \sum\nolimits_{k = 1}^3 {{{\left( { - 1} \right)}^{k - 1}}\frac{{{m^m}}}{{\Gamma \left( m \right)d^{ - \alpha m}}}} \notag \\
 &\times \underbrace {\int_{{{{\tau_0}} \mathord{\left/
 {\vphantom {{{\tau_0}} {P}}} \right.
 \kern-\nulldelimiterspace} {P}}}^\infty  {{z^{m - 1}}\exp \left( { - \frac{{{\Lambda_{\max }}}}{{{\Omega_k}}} - \frac{{mz}}{{d^{ - \alpha }}}} \right)dz} }_{{{\rm X}_m}}.
\end{align}

Next, using  \eqref{eq:Gamma} for $\frac{\tau_2}{\tau_2 +1} < \delta < \frac{\tau_0 -\tau_1}{\tau_0}$, with the help of \cite[Eq. 3.471.9]{jeffrey2007table}, the integral  in (\ref{eq19}) is evaluated as
\begin{align}
\label{eq20}
{{\rm X}_m} &= \sum\limits_{i = 0}^{m - 1} {{\begin{array}{*{20}{c}}
{m-1}\choose{i}
\end{array}} e^{ - \frac{{m{\tau_0}}}{{Pd^{ - \alpha }}}} {{\left( {\frac{{{\tau_0}}}{{P}}} \right)}^{m - i - 1}}}  \notag\\
& \times 2 e^{ - \frac{{{{{\Phi }}_{11}}}}{{{\Omega_k}}}}{\left( {\sqrt {\frac{{{{{\Phi }}_{12}}}}{{{\Omega_k}}}\frac{{d^{ - \alpha }}}{{mP}}} } \right)^{i + 1}}{K_{i + 1}}\left( {{\beta_{1,k}}} \right),
\end{align}
in which  $ {n}\choose{k}$  shows the binomial coefficient. Then, it can be checked that
for $\delta \ge \frac{\tau_0 -\tau_1}{\tau_0}$, the value of
${\rm X}_m$ is obtained by replacing ${{{\Phi }}_{11}}$ and ${{{\Phi }}_{12}}$  with ${{{\Phi }}_{21}}$ and ${{{\Phi }}_{22}}$ in (\ref{eq20}), respectively.
Substituting (\ref{eq20}) into (\ref{eq19}), respecting the value of $\delta $,
after some mathematical manipulations the proof is completed.

\renewcommand{\theequation}{B-\arabic{equation}}
\setcounter{equation}{0}  
\section*{Appendix B}
From (\ref{eq10}) the outage probability of user~1 in (\ref{eq11}) can be further expressed as
\begin{equation}
\label{eq21}
\begin{array}{c}
p_{\rm{F}}^{\left( 1 \right)} = 1 - \mathbb{P} \left\{ {g_1 > g_2 > {\tau_0}\left( {\frac{1}{{P'}} + \sigma_e^2} \right),g > \frac{{{\tau_0}}}{{P}}} \right\}\\
 - \mathbb{P} \left\{ {g_1 > \frac{{{\tau_1}\left( {P'\sigma_e^2 + 1} \right)\left( {1 + {\tau_2}} \right)g_2}}{{P'g_2 - \left( {P'\sigma_e^2 + 1} \right){\tau_2}}},g > \frac{{{\tau_0}}}{{P}},} \right.\\
{\rm{       }}\left. {{\rm{   }}{\tau_2}\left( {\frac{1}{{P'}} + \sigma_e^2} \right) < g_2 < {\tau_0}\left( {\frac{1}{{P'}} + \sigma_e^2} \right)} \right\}.
\end{array}
\end{equation}
Note that ${f_{g_2,g_1}}\left( {x,y} \right) = \sum\limits_{i \in \left\{ {1,3} \right\}} {\frac{1}{{{\Omega_{\bar i}}{\Omega_i}}}\exp \left( { - \frac{x}{{{\Omega_{\bar i}}}} - \frac{y}{{{\Omega_i}}}} \right)}$ and ${\tau_1} + {\tau_2} + {\tau_1}{\tau_2} = {\tau_0}$. The proof of the first probability is not shown in this paper as it involves simpler mathematical computations than the second one and can be easily derived by adopting steps in Appendix A. The second probability in (\ref{eq21}), denoted as ${\Upsilon_{{\rm{II}}}}$, is obtained as follows
\begin{equation}
\label{eq22}
{\Upsilon_{{\rm{II}}}} = \int\limits_{\frac{{{\tau_0}}}{{P}}}^\infty  {\int\limits_{{\tau_2}\left( {\sigma_e^2 + \frac{1}{{\eta \left( {Pz - {\tau_{\rm{0}}}} \right)}}} \right)}^{{\tau_0}\left( {\sigma_e^2 + \frac{1}{{\eta \left( {Pz - {\tau_{\rm{0}}}} \right)}}} \right)} {\int\limits_{\nu \left( {x,z} \right)}^\infty  {{f_{g_2,g_1}}\left( {x,y} \right){f_{g}}\left( z \right)dxdydz} } },
\end{equation}
where $\nu \left( {x,z} \right) \triangleq \frac{{{\tau_1}\left( {1 + {\tau_2}} \right)\left( {\eta \left( {Pz - {\tau_{\rm{1}}}} \right)\sigma_e^2 + 1} \right)x}}{{\eta \left( {Pz - {\tau_{\rm{0}}}} \right)x - {\tau_2}\left( {\eta \left( {Pz - {\tau_{\rm{0}}}} \right)\sigma_e^2 + 1} \right)}}$.

By letting $\upsilon  = \eta \left( {Pz - {\tau_0}} \right)$ and then $\rho  = \frac{\upsilon }{{\upsilon \sigma_e^2 + 1}}x - {\tau_2}$, the above equality can be expressed by (B.3),
where ${{\rm H}_i}\left( \rho  \right) \triangleq \left( {\frac{{{\tau_{\rm{1}}}\left( {1 + {\tau_{\rm{2}}}} \right)}}{\rho }\frac{1}{{{\Omega_{\bar i}}}} + \frac{1}{{{\Omega_i}}}} \right)\left( {\rho  + {\tau_{\rm{2}}}} \right)$, $\bar i = 3/i$, (a) is due to \cite[Eq.3.471.9]{jeffrey2007table} and (b) is obtained by using Gaussian-Chebyshev quadrature \cite{yang2017impact}.

\end{document}